\begin{document}
{\renewcommand{\thefootnote}{\fnsymbol{footnote}}
\begin{center}
{\LARGE  Properties of fluctuating states\\ in loop quantum cosmology}\\
\vspace{1.5em}
Martin Bojowald\footnote{e-mail address: {\tt bojowald@gravity.psu.edu}}
\\
\vspace{0.5em}
Institute for Gravitation and the Cosmos,\\
The Pennsylvania State
University,\\
104 Davey Lab, University Park, PA 16802, USA\\
\vspace{1.5em}
\end{center}
}

\setcounter{footnote}{0}

\newtheorem{theo}{Theorem}
\newtheorem{prop}{Proposition}
\newtheorem{lemma}{Lemma}
\newtheorem{defi}{Definition}

\newcommand{\proofend}{\raisebox{1.3mm}{\fbox{\begin{minipage}[b][0cm][b]{0cm}
\end{minipage}}}}
\newenvironment{proof}{\noindent{\it Proof:} }{\mbox{}\hfill \proofend\\\mbox{}}
\newenvironment{ex}{\noindent{\it Example:} }{\medskip}
\newenvironment{rem}{\noindent{\it Remark:} }{\medskip}

\newcommand{\case}[2]{{\textstyle \frac{#1}{#2}}}
\newcommand{\lP}{\ell_{\mathrm P}}

\newcommand{\md}{{\mathrm{d}}}
\newcommand{\tr}{\mathop{\mathrm{tr}}}
\newcommand{\sgn}{\mathop{\mathrm{sgn}}}

\newcommand*{\R}{{\mathbb R}}
\newcommand*{\N}{{\mathbb N}}
\newcommand*{\Z}{{\mathbb Z}}
\newcommand*{\Q}{{\mathbb Q}}
\newcommand*{\C}{{\mathbb C}}

\begin{abstract}
 In loop quantum cosmology, the values of volume fluctuations and correlations
 determine whether the dynamics of an evolving state exhibits a bounce. Of
 particular interest are states that are supported only on either the positive
 or the negative part of the spectrum of the Hamiltonian that generates this
 evolution. It is shown here that the restricted support on the spectrum does
 not significantly limit the possible values of volume fluctuations. 
\end{abstract}

\section{Introduction}

A solvable model \cite{BouncePert} that captures basic features of classical
and quantum cosmology is given by two canonical variables, $Q$ and $P$ with
Poisson bracket $\{Q,P\}=1$, and a 1-parameter family of Hamiltonians,
$H_{\delta}=|Q\sin(\delta P)|/\delta$ with $\delta\geq 0$. In the limit
$\delta\to0$, $H_0=|QP|$ is quadratic up to the absolute value, and a system
close to an upside-down harmonic oscillator is obtained. Since $QP$ and
therefore ${\rm sgn}(QP)$ is preserved by equations of motion generated by the
auxiliary Hamiltonian $H_0'=QP$, the set of regular solutions (such that
$QP\not=0$) of the classical $H_0$-system is given by the union of two
disjoint sets: solutions of the $H_0'$-system with initial values $Q(0)$,
$P(0)$ such that ${\rm sgn}(Q(0)P(0))=1$, and solutions of the $-H_0'$-system
with initial values $Q(0)$, $P(0)$ such that ${\rm sgn}(Q(0)P(0))=-1$. All
classical solutions can therefore be obtained from a quadratic Hamiltonian.

For $\delta\not=0$, $H_{\delta}=|{\rm Im}J_{\delta}|/\delta$ is, up to the
absolute value, linear in $J_{\delta}:= Q\exp(i\delta P)$, whose real and
imaginary parts, together with $Q$, are generators of the ${\rm sl}(2,{\mathbb
  R})$ algebra
\begin{equation}
 \{Q,{\rm Re}J_{\delta}\}= -\delta {\rm Im}J_{\delta}\quad,\quad \{Q,{\rm
   Im}J_{\delta}\}= \delta {\rm Re}J_{\delta}\quad,\quad \{{\rm
   Re}J_{\delta},{\rm Im}J_{\delta}\}= \delta Q\,.
\end{equation}
Again, introducing an auxiliary Hamiltonian $H_{\delta}'={\rm
  Im}J_{\delta}/\delta$, all regular solutions (such that ${\rm
  Im}J_{\delta}\not=0$) of the $H_{\delta}$-system can be obtained from
solutions of the $\pm H_{\delta}'$-systems with suitable initial values.

In a simple cosmological interpretation, $|Q|$ is proportional to the volume
of an expanding or collapsing universe, while $P$ is proportional to the
Hubble parameter. According to the Friedmann equation of classical cosmology
for flat spatial slices, $H_0=|QP|$ can be interpreted as the momentum
canonically conjugate to a free, massless scalar source $\phi$, whose energy
density $\rho\propto\frac{1}{2}p_{\phi}^2/Q^2$ with the momentum $p_{\phi}$
canonically conjugate to $\phi$ is then required to be proportional to
$P^2$. Solutions $Q(\phi)$ and $P(\phi)$ of Hamilton's equations of motion
generated by $\pm H_0\propto p_{\phi}$ therefore describe how $Q$ and $P$
change in relation to the ``internal time'' $\phi$. If $\delta\not=0$,
$H_{\delta}$ can still be interpreted in this way, but only if the Friedmann
equation is modified such that $P^2$ is replaced by $\sin^2(\delta
P)/\delta^2$. This modification may be motivated by the appearence of
holonomies in loop quantum gravity \cite{LoopRep,ALMMT} and loop quantum
cosmology \cite{IsoCosmo}, and is supposed to describe an implication of
quantum geometry.

Replacing the unbounded function $P^2$ with a bounded function $\sin^2(\delta
P)/\delta^2$, still proportional to the energy density of a matter source,
suggests that the classical big-bang singularity, at which the energy density
diverges, could be avoided by quantum-geometry effects
\cite{GenericBounce}. Indeed, solutions for $Q(\phi)$ of equations of motion
generated by $\pm H_{\delta}'$,
\begin{equation}
 \frac{{\rm d}Q}{{\rm d}\phi}= \pm {\rm Re}J_{\delta}(\phi)\quad,\quad
 \frac{{\rm d}{\rm Re} J_{\delta}}{{\rm d}\phi}=\pm Q(\phi)\,,
\end{equation}
are superpositions of real exponential functions. If the condition
$Q^2-|J_{\delta}|^2=0$ is imposed, which ensures that $P$ in the definition of
$J_{\delta}$ is real, the equation
\begin{equation} \label{QJ}
 Q^2-({\rm Re}J_{\delta})^2= ({\rm Im}J_{\delta})^2= (\delta H_{\delta}')^2>0\,,
\end{equation}
which is by definition positive for regular solutions, implies that $Q(\phi)$
must be cosh-like and ${\rm Re}J_{\delta}(\phi)$ sinh-like. The eternally
collapsing behavior of the volume $Q(\phi)$ approaching zero if $\delta=0$,
$Q(\phi)=Q(0)\exp(\pm\phi)$, is then replaced by a ``bounce'' at the non-zero
minimum of cosh.

The preceding argument ignores quantum fluctuations, which may be expected to
be significant in a discussion of big-bang solutions. If $(\Delta Q)^2$ is
large, it could conceivable change the balance of signs in (\ref{QJ}), in
which $\langle\hat{Q}^2\rangle=\langle\hat{Q}\rangle^2+(\Delta Q)^2$ would
take the place of $Q^2$. For states with $(\Delta Q)^2\geq (\delta
H_{\delta}')^2+(\Delta{\rm Re}J_{\delta})^2$, the right-hand side of
(\ref{QJ}), written for expectation values, is no longer positive, and
$\langle\hat{Q}\rangle(\phi)$ would not be cosh-like. The possibility of such
non-bouncing solutions in loop quantum cosmology has been demonstrated using
canonical effective methods \cite{NonBouncing}, in particular for small
$\delta H_{\delta}'$ relevant for an understanding of generic spacelike
singularities \cite{Infrared,EFTLQC}.

However, for quantum states the absolute value in $H_{\delta}$ has to be
treated with greater care than in the case of classical solutions. Solutions
of quantum evolution generated by an operator $\hat{H}_{\delta}$ via a
Schr\"odinger equation for wave functions can be expressed as superpositions
of solutions of quantum evolution generated by an operator
$\hat{H}_{\delta}'$, provided the latter are supported solely on the positive
or negative part of the spectrum of $\hat{H}_{\delta}'$. (See
Sec.~\ref{s:Exist} below for a demonstration.) This condition is a
straightforward replacement of the classical restriction on initial
values. But it may have more significant ramifications, in particular when
quantum fluctuations are taken into account that may be larger than the
expectation value $\langle\hat{H}_{\delta}\rangle$, as required to change the
signs in (\ref{QJ}). A state that is supported only on the positive part of
the spectrum of $\hat{H}_{\delta}'$ and has an expectation value of
$|\hat{H}_{\delta}'|$ close to zero may not have arbitrarily large fluctuations
of $\hat{H}_{\delta}'$. The question to be addressed in this paper is whether
this restriction also limits the size of fluctuations of $\hat{Q}$.

\section{Eigenstates}

We will first determine the spectra of $\hat{H}_0'$ and $\hat{H}_{\delta}'$
and then discuss relevant properties of states obtained from superpositions of
their positive parts.

\subsection{Eigenstates of $\hat{H}_0'$}

We use the symmetric ordering
\begin{equation}
 \hat{H}_0'= \frac{1}{2}(\hat{Q}\hat{P}+\hat{P}\hat{Q})
\end{equation}
to quantize $H_0'=QP$ on the standard $L^2$-Hilbert space. Eigenstates of this
operator in the $Q$ and $P$-representations are determined by the same type of
first-order differential equation,
\begin{equation}
 Q\frac{{\rm d}\psi_{\lambda}(Q)}{{\rm
     d}Q}+\frac{1}{2}\psi_{\lambda}(Q)=i\frac{\lambda}{\hbar}\psi_{\lambda}(Q)
\end{equation}
in the $Q$-representation, and
\begin{equation}
 P\frac{{\rm d}\phi_{\lambda}(P)}{{\rm
     d}P}+\frac{1}{2}\phi_{\lambda}(P)=-i\frac{\lambda}{\hbar}\phi_{\lambda}(P)
\end{equation}
in the $P$-representation. For every $\lambda$, there are in each
representation two orthogonal solutions $\psi_{\lambda\pm}(Q)$ and
$\phi_{\lambda\pm}(P)$, respectively, given by
\begin{eqnarray}
 \psi_{\lambda+}(Q)&=&\left\{\begin{array}{cl} 0 & \mbox{if }Q\leq 0\\
     c_{\lambda+} Q^{i\lambda/\hbar-1/2} & \mbox{if
     }Q>0\end{array}\right.\\
 \psi_{\lambda-}(Q)&=&\left\{\begin{array}{cl} c_{\lambda-}  
(-Q)^{i\lambda/\hbar-1/2} & \mbox{if }Q<0\\     0 & \mbox{if 
     }Q\geq 0\end{array}\right.\\
\phi_{\lambda+}(P)&=&\left\{\begin{array}{cl} 0 & \mbox{if }P\leq 0\\
     d_{\lambda+} P^{-i\lambda/\hbar-1/2} & \mbox{if
     }P>0\end{array}\right.\\
 \phi_{\lambda-}(P)&=&\left\{\begin{array}{cl} d_{\lambda-}  
(-P)^{-i\lambda/\hbar-1/2} & \mbox{if }P<0\\     0 & \mbox{if 
     }P\geq0\end{array}\right. \,.
\end{eqnarray}

It is obvious that $\psi_{\lambda_1+}$ and $\psi_{\lambda_2-}$ are orthogonal
to each other for any $\lambda_1$ and $\lambda_2$, and so are
$\phi_{\lambda_1+}$ and $\phi_{\lambda_2-}$. Moreover, 
\begin{eqnarray}
 \int_{-\infty}^{\infty} \psi_{\lambda_1\pm}^*(Q) \psi_{\lambda_2\pm}(Q){\rm
   d}Q&=& c_{\lambda_1\pm}c_{\lambda_2\pm} \int_0^{\infty}
 q^{i(\lambda_2-\lambda_1^*)/\hbar} \frac{{\rm 
     d}q}{q}\nonumber\\
&=&c_{\lambda_1\pm}c_{\lambda_2\pm}
 \int_{-\infty}^{\infty} \exp(ix(\lambda_2-\lambda_1^*)/\hbar) {\rm
   d}x\nonumber\\
&=&
 \left\{\begin{array}{cl} 2\pi
 \hbar c_{\lambda_1\pm}c_{\lambda_2\pm} \delta(\lambda_2-\lambda_1^*) &\mbox{if
   }\lambda_2-\lambda_1^*\in{\mathbb
     R}\\\infty&\mbox{otherwise}\end{array}\right.
\end{eqnarray}
and
\begin{eqnarray}
 \int_{-\infty}^{\infty} \phi_{\lambda_1\pm}^*(P) \phi_{\lambda_2\pm}(P){\rm
   d}P&=&d_{\lambda_1\pm}d_{\lambda_2\pm} \int_0^{\infty}
 p^{-i(\lambda_2-\lambda_1^*)/\hbar} \frac{{\rm 
     d}p}{p}\nonumber\\
&=& d_{\lambda_1\pm}d_{\lambda_2\pm}
 \int_{-\infty}^{\infty} \exp(-iy(\lambda_2-\lambda_1^*)/\hbar) {\rm
   d}y\nonumber\\
&=& 
\left\{\begin{array}{cl} 2\pi
 \hbar d_{\lambda_1\pm}d_{\lambda_2\pm} \delta(\lambda_2-\lambda_1^*) &\mbox{if
   }\lambda_2-\lambda_1^*\in{\mathbb R}\\\infty
   &\mbox{otherwise}\end{array}\right. 
\end{eqnarray}
where the substitutions $q=|Q|$, $x=\log q$, $p=|P|$ and $y=\log p$ have been
used. For real $\lambda$, all eigenstates are delta-function normalizable,
fixing the coefficients
$c_{\lambda\pm}=1/\sqrt{2\pi\hbar}=d_{\lambda\pm}$. The spectrum of
$\hat{H}_0'$ is therefore real, continuous, and twofold degenerate.

\subsection{Eigenstates of $\hat{H}_{\delta}'$}

For $\delta\not=0$, the Hamiltonian is periodic in $P$ with period
$2\pi/\delta$. To be specific, we will assume that the basic operators are
represented on a separable Hilbert space of square-integrable functions
periodic in $P$, such that $\hat{Q}$ has a discrete spectrum given by
$\hbar\delta{\mathbb Z}$. Inequivalent representations, such as states which
are periodic only up to a phase factor $\exp(i\epsilon)$, for which the
spectrum of $\hat{Q}$ is shifted by $\hbar\epsilon$, or non-separable Hilbert
spaces as used often in loop quantum cosmology \cite{Bohr}, would not change
our results. In the $Q$-representation, our states therefore obey an $\ell^2$
inner product such that
\begin{equation}\label{l2}
(\psi_1,\psi_2)=\sum_{n=-\infty}^{\infty}
\psi_1(n\hbar\delta)^* \psi_2(n\hbar\delta)\,.
\end{equation}

We write $\hat{H}_{\delta}'$ as
\begin{equation} \label{HOp}
 \hat{H}_{\delta}'= \frac{{\rm Im}\hat{J}_{\delta}}{\delta}=
 \frac{1}{2i\delta}\left(\hat{J}_{\delta}-\hat{J}_{\delta}^{\dagger}\right)=
 \frac{1}{2i\delta} \left(\hat{Q}\exp(i\delta\hat{P})-
   \exp(-i\delta\hat{P})\hat{Q}\right)\,.
\end{equation}
Since $\exp(i\delta\hat{P})$ is a translation operator in the
$Q$-representation, eigenstates of $\hat{H}_{\delta}'$ in this representation
are determined by a difference equation
\begin{equation}\label{Diff}
 (Q+\hbar\delta)\psi_{\lambda}(Q+\hbar\delta)+2i\delta\lambda\psi_{\lambda}(Q)- Q
 \psi_{\lambda}(Q-\hbar\delta)=0
\end{equation}
where $Q$ takes the values $n\hbar\delta$ with integer $n$.  This equation
with non-constant coefficients does not have straightforward solutions.  It
is, however, possible to show that eigenstates obey a similar twofold
degeneracy as in the case of $\hat{H}_0'$:
\begin{lemma} \label{l:Symm}
  For given $\lambda$, there are two orthogonal solutions $\psi_{\lambda\pm}$,
  one of which is supported on positive values of $Q$ (and $Q=0$), and one on
  negative values of $Q$. They are related by
\begin{equation} \label{psipm}
 \psi_{\lambda\pm}(Q)=\psi_{\lambda\mp}(-Q-\hbar\delta)\,.
\end{equation}
\end{lemma}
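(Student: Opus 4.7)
The plan is to analyze the difference equation~(\ref{Diff}) directly and to exploit the vanishing of certain coefficients at specific lattice sites. Observe that the coefficient $Q$ in front of $\psi_\lambda(Q-\hbar\delta)$ vanishes at $Q=0$, while the coefficient $(Q+\hbar\delta)$ in front of $\psi_\lambda(Q+\hbar\delta)$ vanishes at $Q=-\hbar\delta$. Evaluating (\ref{Diff}) at $Q=0$ therefore gives a two-term relation between $\psi_\lambda(0)$ and $\psi_\lambda(\hbar\delta)$ alone, and evaluating at $Q=-\hbar\delta$ gives a two-term relation between $\psi_\lambda(-\hbar\delta)$ and $\psi_\lambda(-2\hbar\delta)$ alone. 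Consequently no equation in the system couples the values $\psi_\lambda(n\hbar\delta)$ with $n\geq 0$ to those with $n\leq -1$: the solution space decomposes into two sectors.

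Using this decoupling I would construct $\psi_{\lambda+}$ by setting $\psi_{\lambda+}(n\hbar\delta)=0$ for all $n\leq -1$, choosing $\psi_{\lambda+}(0)$ as a free normalization, and propagating forward via (\ref{Diff}); each forward step is solvable because $(Q+\hbar\delta)\neq 0$ for $Q\geq 0$. Symmetrically, $\psi_{\lambda-}$ is obtained by setting $\psi_{\lambda-}(n\hbar\delta)=0$ for $n\geq 0$, choosing $\psi_{\lambda-}(-\hbar\delta)$ freely, and propagating toward $n\to -\infty$, which is always solvable since the coefficient $Q$ is nonzero for $Q\leq -\hbar\delta$. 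The two solutions have disjoint supports by construction, hence are orthogonal in the $\ell^2$-pairing~(\ref{l2}).

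For the identity~(\ref{psipm}) I would check that the involution $Q\mapsto -Q-\hbar\delta$ carries solutions of (\ref{Diff}) at eigenvalue $\lambda$ into solutions at the same $\lambda$. Substituting $\tilde\psi_\lambda(Q):=\psi_\lambda(-Q-\hbar\delta)$ into the left-hand side of (\ref{Diff}) yields $(Q+\hbar\delta)\psi_\lambda(-Q-2\hbar\delta)+2i\delta\lambda\psi_\lambda(-Q-\hbar\delta)-Q\psi_\lambda(-Q)$, which is precisely (\ref{Diff}) evaluated at $Q':=-Q-\hbar\delta$, since then $Q'+\hbar\delta=-Q$ and $-Q'=Q+\hbar\delta$. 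Hence $\tilde\psi_\lambda$ is again a solution with the same eigenvalue. Applied to $\psi_{\lambda+}$, the involution produces a solution supported on $Q\leq -\hbar\delta$; by the one-dimensionality of that sector it must be a scalar multiple of $\psi_{\lambda-}$, and (\ref{psipm}) is equivalent to fixing the relative normalization so that this multiple equals unity.

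The argument is essentially algebraic, so the only real obstacle is the bookkeeping of the coefficient matching in the substitution step, together with the explicit acknowledgment that (\ref{psipm}) encodes a normalization convention between the two otherwise independent solutions. There are no convergence or functional-analytic subtleties, because the difference equation involves only finitely many terms per site and the two supports are disjoint subsets of the lattice $\hbar\delta\mathbb{Z}$.
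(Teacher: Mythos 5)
Your proof is correct and follows essentially the same route as the paper: both arguments rest on the vanishing of the coefficients in (\ref{Diff}) at $Q=0$ and $Q=-\hbar\delta$ to separate the lattice into the sectors $n\geq 0$ and $n\leq -1$, and both establish (\ref{psipm}) via the substitution $Q\mapsto -Q-\hbar\delta$. Your direct observation that the equations at $n=0$ and $n=-1$ simply fail to couple the two sectors is a slightly cleaner packaging of the paper's downward induction from the ansatz $\psi_{\lambda+}(-\hbar\delta)=0$, and you make explicit (as the paper does not) that (\ref{psipm}) fixes a relative normalization between the two one-dimensional sectors.
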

\begin{proof}
  Let us first look for solutions such that
  $\psi_{\lambda+}(-\hbar\delta)=0$. Using the equation (\ref{Diff}) for
  $Q=-\hbar\delta$, we obtain $\psi_{\lambda+}(-2\hbar\delta)=
  2i(\lambda/\hbar)\psi_{\lambda+}(-\hbar\delta)=0$. Moreover, if
  $\psi_{\lambda+}(-(n-1)\hbar\delta)=0$ and
  $\psi_{\lambda+}(-n\hbar\delta)=0$, using the equation for $Q=-n\hbar\delta$
  shows that $\psi_{\lambda+}(-(n+1)\hbar\delta)=0$. By induction,
  $\psi_{\lambda+}(Q)=0$ for all integer $Q/\hbar\delta<0$. However, if
  $\psi_{\lambda+}(0)\not=0$ for such a solution,
  $\psi_{\lambda+}(\hbar\delta)=-2i(\lambda/\hbar)\psi_{\lambda+}(0)\not=0$,
  using (\ref{Diff}) for $Q=0$. The solution therefore is not identically
  zero, and it is unique up to multiplication with a constant
  $\psi_{\lambda+}(0)$.

  A similar line of arguments, starting with the assumption that
  $\psi_{\lambda-}(0)=0$, implies that $\psi_{\lambda-}(Q)=0$ for all
  integer $Q/\hbar\delta\geq 0$, while assuming
  $\psi_{\lambda-}(-\hbar\delta)\not=0$ guarantees that the solution is
  not identically zero. Since the supports of any $\psi_{\lambda_1+}$ and
  $\psi_{\lambda_2-}$ are disjoint, the two states are orthogonal with respect
  to the inner product (\ref{l2}).

  Substituting $-Q-\hbar\delta$ for $Q$ in (\ref{Diff}), we obtain the
  equation
\begin{eqnarray}
 0&=&-Q\psi_{\lambda}(-Q)+2i\delta\lambda\psi_{\lambda}(-Q-\hbar\delta)+
 (Q+\hbar\delta)  \psi_{\lambda}(-Q-2\hbar\delta)=0\nonumber\\
&=&-Q\bar{\psi}(Q-\hbar\delta)+2i\delta \lambda\bar{\psi}(Q)+
(Q+\hbar\delta)\bar{\psi}(Q+\hbar\delta)
\end{eqnarray}
equivalent to (\ref{Diff}).  The definition
$\bar{\psi}(Q)=\psi(-Q-\hbar\delta)$ introduced in the second line maps a
function $\psi$ supported on non-negative integers (times $\hbar\delta)$ to a
function $\bar{\psi}$ supported on negative integers (times $\hbar\delta$),
and vice verse. Applied to solutions of (\ref{Diff}), it therefore maps
$\psi_{\lambda\pm}$ to $\psi_{\lambda\mp}$.
\end{proof}

In the $P$-representation, eigenstates of (\ref{HOp}) obey the first-order
differential equation
\begin{equation}
 \sin(\delta P)\frac{{\rm d}\psi_{\lambda\pm}(P)}{{\rm d}P}+ \frac{1}{2}\delta
 \exp(i\delta P)\psi_{\lambda\pm}(P)=
 -i\frac{\lambda\delta}{\hbar}\psi_{\lambda\pm}(P)\,. 
\end{equation}
This equation is solved by
\begin{eqnarray}
 \psi_{\lambda+}(P) &=& \left\{\begin{array}{cl} 0 & \mbox{if }\pi\leq\delta
     P\leq 2\pi
     \\\displaystyle \sqrt{\frac{\delta}{2\pi\hbar}} \frac{(\cot(\delta
       P/2))^{i\lambda/\hbar}}{\sqrt{\sin(\delta P)}} \exp(-i\delta P/2) &
     \mbox{if }0<\delta P<\pi\end{array}\right.\\
 \psi_{\lambda-}(P) &=& \left\{\begin{array}{cl}\displaystyle
     \sqrt{\frac{\delta}{2\pi\hbar}} 
     \frac{(-\cot(\delta 
       P/2))^{i\lambda/\hbar}}{\sqrt{-\sin(\delta P)}} \exp(-i\delta P/2) &
     \mbox{if }\pi<\delta P<2\pi\\ 0 &
     \mbox{if }0\leq\delta P\leq \pi\end{array}\right.\,.
\end{eqnarray}
The substitution $x=\log |\cot(\delta P/2)|$ shows that these states are
delta-function normalized. The spectrum therefore has the same properties as
in the case of $\hat{H}_0'$, being real, continuous, and twofold degenerate.

\subsection{Existence of positive-energy solutions with large fluctuations} 
\label{s:Exist}

For any $\delta$, completeness of the eigenstates of a self-adjoint operator
shows that any state $\psi(Q)$ has an expansion of the form
\begin{equation}
 \psi(Q)=
 \frac{1}{\sqrt{2}}\left(\int_{-\infty}^{\infty}c_{\lambda+}\psi_{\lambda+}(Q)
   {\rm 
   d}\lambda+ \int_{-\infty}^{\infty} c_{\lambda-}\psi_{\lambda-}(Q) {\rm
   d}\lambda\right) 
\end{equation}
in terms of eigenstates of $\hat{H}_{\delta}'$, for some $c_{\lambda\pm}$
normalized such that $\int_{-\infty}^{\infty}|c_{\lambda\pm}|^2{\rm
  d}\lambda=1$. It evolves according to
\begin{equation}
  \psi(Q,\phi)=\frac{1}{\sqrt{2}}\left(
  \int_{-\infty}^{\infty}c_{\lambda+}\exp(-i\lambda\phi/\hbar)\psi_{\lambda+}(Q)
  {\rm 
    d}\lambda+ \int_{-\infty}^{\infty}
  c_{\lambda-}\exp(-i\lambda\phi/\hbar)\psi_{\lambda-}(Q){\rm d}\lambda\right) \,.
\end{equation}

The actual dynamics in our models of interest is generated by the Hamiltonian
$\hat{H}_{\delta}=|\hat{H}_{\delta}'|$. This operator has the same eigenstates
$\psi_{\lambda\pm}(Q)$, but with eigenvalues $|\lambda|$. Its spectrum is
therefore four-fold degenerate and positive. Dynamical solutions in these
models are given by
\begin{equation}
  \psi(Q,\phi)= \frac{1}{\sqrt{2}}\left(
  \int_{-\infty}^{\infty}c_{\lambda+}\exp(-i|\lambda| \phi/\hbar)\psi_{\lambda+}(Q)
  {\rm 
    d}\lambda+ \int_{-\infty}^{\infty}
  c_{\lambda-}\exp(-i|\lambda| \phi/\hbar)\psi_{\lambda-}(Q){\rm
    d}\lambda\right) \,. 
\end{equation}
The decomposition
$\psi(Q,\phi)=\frac{1}{\sqrt{2}}
\left(N_-\psi_-(Q,\phi)+N_+\psi_+(Q,\phi)\right)$
with
\begin{equation}
  \psi_-(Q,\phi)= \frac{1}{N_-}\left(
  \int_{-\infty}^{0}c_{\lambda+}\exp(i\lambda \phi/\hbar)\psi_{\lambda+}(Q)
  {\rm 
    d}\lambda+ \int_{-\infty}^{0}
  c_{\lambda-}\exp(i\lambda\phi /\hbar)\psi_{\lambda-}(Q){\rm d}\lambda\right) 
\end{equation}
and
\begin{equation}
  \psi_+(Q,\phi)= \frac{1}{N_+}\left(
  \int_0^{\infty}c_{\lambda+}\exp(-i\lambda\phi/\hbar)\psi_{\lambda+}(Q)
  {\rm 
    d}\lambda+ \int_0^{\infty}
  c_{\lambda-}\exp(-i\lambda\phi/\hbar)\psi_{\lambda-}(Q){\rm d}\lambda \right)\,,
\end{equation}
where $N_-^2=\int_{-\infty}^0(|c_{\lambda+}|^2+|c_{\lambda-}|^2){\rm d}\lambda$
and $N_+^2=\int_0^{\infty}(|c_{\lambda+}|^2+|c_{\lambda-}|^2){\rm d}\lambda$
such that $N_-^2+N_+^2=2$,
demonstrates the claim about solutions made in the introduction.

The decomposition into positive-energy solutions $\psi_+$ and negative-energy
solutions $\psi_-$ simply rewrites generic wave functions and does not
restrict their fluctuations of $Q$ or $P$. However, it is sometimes preferred
\cite{APS} (although not required \cite{GenRepIn}) to discard negative-energy
solutions and consider only positive-energy solutions $\psi_+$ (or vice verse,
but no superpositions of solutions with opposite signs of the energy).  A
question of interest in quantum cosmology is whether this restriction in any
way limits the possible magnitude of fluctuations of $Q$ or $P$, which would
then have consequences for bouncing or non-bouncing behavior according to
\cite{NonBouncing}. Using the spectral properties derived in the preceding
section, we now show that this is not the case.

In particular, for potential non-bouncing behavior, we are interested in
solutions with small $\langle\hat{H}_{\delta}\rangle$, such that 
\begin{equation} \label{HDelta} 
\delta^2\langle\hat{H}_{\delta}\rangle^2+
  \delta^2(\Delta H_{\delta})^2\leq (\Delta Q)^2- (\Delta{\rm
    Re}J_{\delta})^2\,.
\end{equation}
If $\langle\hat{H}_{\delta}\rangle$ is small, given the positivity of the
spectrum of $\hat{H}_{\delta}$, the range of possible values of $\Delta
H_{\delta}$ seems to be limited because the state in the
$\lambda$-representation can spread out only to one side of
$\langle\hat{H}_{\delta}\rangle$. However, the twofold degeneracy of the
spectrum of $\hat{H}_{\delta}'$, of the specific form derived in the preceding
section, in particular in Lemma~\ref{l:Symm}, shows that there is no such
limitation for fluctuations $\Delta Q$ even if $\langle\hat{Q}\rangle$ is
required to be small: In order to construct a state, supported only on the
positive part of the spectrum of $\hat{H}_{\delta}'$, such that it has a small
expectation value and large fluctuations of $\hat{Q}$, we choose some
$c_{\lambda}$ such that $\int_0^{\infty}|c_{\lambda}|^2{\rm d}\lambda=1$, and
define $\psi_{c+}(Q)=\int_0^{\infty} c_{\lambda}\psi_{\lambda+}(Q){\rm
  d}\lambda$. This state is supported on the positive part of the spectrum of
$\hat{H}_{\delta}'$, by construction, and has a certain expectation value
$\langle\hat{Q}\rangle_{c_+}>0$ and fluctuations $\Delta_{c+} Q>0$.
Similarly, the state $\psi_{c-}(Q)=\int_0^{\infty}
c_{\lambda}\psi_{\lambda-}(Q){\rm d}\lambda$, using the transformation
(\ref{psipm}), has expectation value
$\langle\hat{Q}\rangle_{c_-}=-\langle\hat{Q}\rangle_{c_+}-\hbar\delta<0$ and
fluctuations $\Delta_{c-} Q=\Delta_{c+} Q>0$.  The state
\begin{equation} \label{psic}
 \psi_{c}= \frac{1}{\sqrt{2}}\int_0^{\infty}
 c_{\lambda}\left(\alpha\psi_{\lambda+}(Q)+
 \sqrt{2-\alpha^2}\psi_{\lambda-}(Q)\right){\rm
   d}\lambda \,,
\end{equation}
with some $|\alpha|\leq\sqrt{2}$,
then has expectation value
\begin{equation}
  \langle\hat{Q}\rangle=\frac{1}{2}\left(
  \alpha^2\langle\hat{Q}\rangle_{c+}+
 (2-\alpha^2)\langle\hat{Q}\rangle_{c-}\right)=
(\alpha^2-1)\langle\hat{Q}\rangle_{c+}-
\frac{2-\alpha^2}{2}\hbar\delta 
\end{equation}
and fluctuations given by
\begin{eqnarray}
 (\Delta Q)^2&=&
 \frac{1}{2}\left(\alpha^2\langle\hat{Q}^2\rangle_{c+}+(2-\alpha^2)
   \langle\hat{Q}^2\rangle_{c-} \right)- \langle\hat{Q}\rangle^2\nonumber\\
&=&  (\Delta_{c+}Q)^2+ \frac{1}{2}\left(\alpha^2\langle\hat{Q}\rangle_{c+}^2+
(2-\alpha^2)\langle\hat{Q}\rangle_{c-}^2\right)\nonumber\\
&& - (\alpha^2-1)^2
\langle\hat{Q}\rangle_{c+}^2+
(2-\alpha^2)(\alpha^2-1)\hbar\delta\langle\hat{Q}\rangle_{c+}-
\frac{(2-\alpha^2)^2}{4} \hbar^2\delta^2\nonumber\\
&=& (\Delta_{c+}Q)^2+ \alpha^2(2-\alpha^2)
\left(\langle\hat{Q}\rangle_{c+}+\frac{1}{2}\hbar\delta\right)^2 \,. 
\end{eqnarray}
For $\alpha\not=1$, the result can also be written as
\begin{equation}
 (\Delta Q)^2= (\Delta_{c+}Q)^2+ \frac{\alpha^2(2-\alpha^2)}{(\alpha^2-1)^2}
 \left(\langle\hat{Q}\rangle+\frac{1}{2}\hbar\delta\right)^2
\end{equation}
using
\begin{equation}
 (\alpha^2-1)
 \left(\langle\hat{Q}\rangle_{c+}+\frac{1}{2}\hbar\delta\right)^2=
 \left(\langle\hat{Q}\rangle+\frac{1}{2}\hbar\delta\right)^2\,.
\end{equation}
Since $\langle\hat{Q}\rangle_{c+}$ is not restricted by the positivity
condition, $\Delta Q$ is unlimited even on states with small expectation value
$\langle\hat{Q}\rangle$.

\section{Moments}

Since $H_0'$ is a function of $Q$ and $P$, $H_0'$-moments in a given state are
related to $Q$ and $P$-moments in the same state. There may therefore be
restrictions on the magnitude of $Q$ or $P$-fluctuations if a state is
required to have small $\langle\hat{H}_0'\rangle$ and small
$H_0'$-fluctuations. We will now demonstrate that $Q$ and $P$-fluctuations are
indeed restricted in such a state, but only if additional assumptions on the
$QP$-covariance are made.

\subsection{Relationships between moments}

Because $\hat{H}_0'$ is quadratic in $\hat{Q}$ and $\hat{P}$, $\Delta H_0'$ is
related to moments of up to fourth order in $Q$ and $P$. In the following
calculations, we will be using the notation of \cite{Counting}, as in
\begin{defi}
  Given a set of operators $\hat{A}_i$, $i=1,\ldots,n$, and integers
  $k_1,\ldots,k_n\geq 0$ such that $\sum_ik_1\geq 2$, the {\em moments} of a
  state are
\begin{equation}
 \Delta(A_1^{k_1}A_2^{k_2}\cdots A_n^{k_n})=
 \left\langle(\Delta\hat{A}_1)^{k_1}(\Delta\hat{A}_2)^{k_2}\cdots
   (\Delta\hat{A}_n)^{k_n}\right\rangle_{\rm symm} \,,
\end{equation}
where $\Delta\hat{A}_i=\hat{A}_i-\langle\hat{A}_i\rangle$, all expectation
values are taken in the given state, and the subscript ``symm'' indicates that
all products of operators are taken in totally symmetric (or Weyl) ordering:
\begin{equation}
 \langle \hat{O}_1\cdots\hat{O}_n\rangle_{\rm symm}=
 \frac{1}{n!} \sum_{\sigma\in S_n}
 \left\langle\hat{O}_{\sigma(1)} \cdots\hat{O}_{\sigma(n)}\right\rangle\,.
\end{equation}
\end{defi}

The following reordering relations will be useful:
\begin{lemma}
 For two operators $\hat{Q}$ and $\hat{P}$ such that
 $[\hat{Q},\hat{P}]=i\hbar$,
\begin{eqnarray}
 \left\langle (\Delta\hat{Q})^2\Delta\hat{P}+
   2\Delta\hat{Q}\Delta\hat{P}\Delta\hat{Q}+
   \Delta\hat{P}(\Delta\hat{Q})^2\right\rangle&=& 4\Delta(Q^2P) \label{Delta3}\\
\left\langle\left(\Delta\hat{Q}\Delta\hat{P}+
\Delta\hat{P}\Delta\hat{Q}\right)^2\right\rangle &=&
4\Delta(Q^2P^2)+\hbar^2\,. \label{Delta4}
\end{eqnarray}
\end{lemma}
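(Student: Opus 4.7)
The plan is to centre the operators by introducing $A=\Delta\hat{Q}$ and $B=\Delta\hat{P}$, which still satisfy $[A,B]=[\hat{Q},\hat{P}]=i\hbar$, and then to reduce every monomial in $A$ and $B$ to the canonical form $A^{k}B^{l}$ using the single rule $BA=AB-i\hbar$. In this form both sides of each identity become routine algebraic combinations of $\langle A^{k}B^{l}\rangle$ and lower expectation values, and the identities can be verified term by term.

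For (\ref{Delta3}) I would first unfold the symmetric moment from its definition. With two copies of $A$ and one of $B$ there are three distinct orderings, so $3\,\Delta(Q^{2}P)=\langle A^{2}B+ABA+BA^{2}\rangle$. Using $ABA=A^{2}B-i\hbar A$ and $BA^{2}=A^{2}B-2i\hbar A$, this collapses to a definite combination of $\langle A^{2}B\rangle$ and $\langle A\rangle$. Applying the same reordering to the left-hand side $A^{2}B+2ABA+BA^{2}$ produces $4A^{2}B-4i\hbar A$, whose expectation is precisely four times that same combination, yielding (\ref{Delta3}) without any further input.

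For (\ref{Delta4}) the same strategy applies, but now $\hbar^{2}$ contributions enter whenever more than one commutator swap is required. I would expand $(AB+BA)^{2}$ into its four monomials and $6\,\Delta(Q^{2}P^{2})$ into the six orderings of two $A$'s and two $B$'s, and reduce each to the canonical $A^{2}B^{2}$. Singly-swapped monomials contribute only $-i\hbar\,AB$ corrections, whereas $BABA$ and $B^{2}A^{2}$ in addition carry c-number constants $-\hbar^{2}$ and $-2\hbar^{2}$ respectively. Summing, $\langle(AB+BA)^{2}\rangle$ acquires a constant $-\hbar^{2}$ while $4\,\Delta(Q^{2}P^{2})$ acquires $-2\hbar^{2}$, and the discrepancy of exactly $\hbar^{2}$ reproduces the additive term on the right-hand side of (\ref{Delta4}).

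The only real difficulty is careful bookkeeping of these iterated commutators, since the additive $\hbar^{2}$ in (\ref{Delta4}) is an irreducible quantum remainder that must emerge from a precise cancellation of several $i\hbar$ contributions. A useful internal check is that both sides of each identity are manifestly Hermitian, so all imaginary $i\hbar$ pieces have to reassemble into real $\hbar^{2}$ terms; any sign error will show up immediately as a non-Hermitian remainder.
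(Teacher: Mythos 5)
Your proposal is correct, and all the specific numbers you quote check out: with $A=\Delta\hat Q$, $B=\Delta\hat P$ one indeed has $3\Delta(Q^2P)=\langle A^2B+ABA+BA^2\rangle=\langle 3A^2B-3i\hbar A\rangle$ against $\langle A^2B+2ABA+BA^2\rangle=\langle 4A^2B-4i\hbar A\rangle$, and in the fourth-order case $BABA=A^2B^2-3i\hbar AB-\hbar^2$ and $B^2A^2=A^2B^2-4i\hbar AB-2\hbar^2$, giving the constants $-\hbar^2$ and $-2\hbar^2$ you cite. Your route is, however, genuinely different from the paper's. You push every monomial to the normal-ordered canonical form $A^kB^l$ and compare coefficients of $\langle A^2B^2\rangle$, $\langle AB\rangle$ and the c-number on both sides; the paper instead never normal-orders, but splits each monomial as $\frac{2}{3}$ of itself plus $\frac{1}{3}$ of its two neighbouring orderings plus explicit commutator corrections, so that the sum reassembles $\frac{2}{3}$ of the full six-term (resp.\ three-term) symmetrized combination directly, with the corrections collapsing to $-i\hbar[A,B]\cdot 3\cdot\frac13=\hbar^2$. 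Your method is more mechanical and scales better to higher moments (it is essentially a Weyl-to-normal-ordering conversion), at the price of having to track three separate coefficient matchings; the paper's resymmetrization is tailored to land on $\Delta(Q^2P^2)$ in one step and makes the $\hbar^2$ remainder appear as a single commutator. One small point to make explicit in a final write-up: your closing claim that the discrepancy is ``exactly $\hbar^2$'' tacitly uses that the $\langle A^2B^2\rangle$ coefficients ($4$ vs.\ $4$) and the $\langle AB\rangle$-linear corrections ($-8i\hbar$ vs.\ $4\times(-2i\hbar)$) agree between the two sides, so only the c-numbers differ; this does hold, but it should be stated rather than left implicit, since it is exactly the kind of cancellation your own Hermiticity check is meant to guard.
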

\begin{proof}
 Starting with the left-hand side of (\ref{Delta3}), we write
\begin{eqnarray*}
 2\Delta\hat{Q}\Delta\hat{P}\Delta\hat{Q} &=&
 \frac{4}{3}\Delta\hat{Q}\Delta\hat{P}\Delta\hat{Q}\\
&&+ \frac{1}{3}
 \left((\Delta\hat{Q})^2\Delta\hat{P}-
   \Delta\hat{Q}[\Delta\hat{Q},\Delta\hat{P}]\right)+ \frac{1}{3} 
 \left(\Delta\hat{P}(\Delta\hat{Q})^2+
   [\Delta\hat{Q},\Delta\hat{P}]\Delta\hat{Q}\right)
\end{eqnarray*}
such that
\begin{eqnarray*}
 \left\langle (\Delta\hat{Q})^2\Delta\hat{P}+
   2\Delta\hat{Q}\Delta\hat{P}\Delta\hat{Q}+
   \Delta\hat{P}(\Delta\hat{Q})^2\right\rangle &=& \frac{4}{3} \left\langle
   (\Delta\hat{Q})^2\Delta\hat{P}+ 
   \Delta\hat{Q}\Delta\hat{P}\Delta\hat{Q}+
   \Delta\hat{P}(\Delta\hat{Q})^2\right\rangle\\
&=& 4\Delta(Q^2P)
\end{eqnarray*}
proves (\ref{Delta3}).

On the left-hand side of (\ref{Delta4}), we 
write
\begin{eqnarray*}
&& \left\langle\left(\Delta\hat{Q}\Delta\hat{P}+
\Delta\hat{P}\Delta\hat{Q}\right)^2\right\rangle\\
 &=& \left\langle
\Delta\hat{Q}\Delta\hat{P}\Delta\hat{Q}\Delta\hat{P}+
\Delta\hat{Q}(\Delta\hat{P})^2\Delta\hat{Q}+
\Delta\hat{P}(\Delta\hat{Q})^2\Delta\hat{P}+
\Delta\hat{P}\Delta\hat{Q}\Delta\hat{P}\Delta\hat{Q}\right\rangle\\
&=& \frac{2}{3}\left\langle
  \Delta\hat{Q}\Delta\hat{P}\Delta\hat{Q}\Delta\hat{P}
  +(\Delta\hat{Q})^2(\Delta\hat{P})^2+ 
\Delta\hat{Q}(\Delta\hat{P})^2\Delta\hat{Q}\right.\\
&&\qquad\left.+
\Delta\hat{P}(\Delta\hat{Q})^2\Delta\hat{P}+
(\Delta\hat{P})^2(\Delta\hat{Q})^2+ 
\Delta\hat{P}\Delta\hat{Q}\Delta\hat{P}\Delta\hat{Q}\right\rangle\\
&&+\frac{1}{3} \left(\Delta\hat{Q}[\Delta\hat{P},\Delta\hat{Q}]\Delta\hat{P}+
  \Delta\hat{Q}[(\Delta\hat{P})^2,\Delta\hat{Q}]+
  \Delta\hat{P}[(\Delta\hat{Q})^2,\Delta\hat{P}]+
  \Delta\hat{P}[\Delta\hat{Q},\Delta\hat{P}]\Delta\hat{Q}\right)
\end{eqnarray*}
using 
\begin{eqnarray*}
 \Delta\hat{Q}\Delta\hat{P}\Delta\hat{Q}\Delta\hat{P} &=& \frac{2}{3}
 \Delta\hat{Q}\Delta\hat{P}\Delta\hat{Q}\Delta\hat{P}+
 \frac{1}{3}\left((\Delta\hat{Q})^2(\Delta\hat{P})^2+ 
 \Delta\hat{Q}[\Delta\hat{P},\Delta\hat{Q}]\Delta\hat{P}\right)\\
\Delta\hat{Q}(\Delta\hat{P})^2\Delta\hat{Q}&=& \frac{2}{3}
\Delta\hat{Q}(\Delta\hat{P})^2\Delta\hat{Q}+
\frac{1}{3}\left((\Delta\hat{Q})^2(\Delta\hat{P})^2+
  \Delta\hat{Q}[(\Delta\hat{P})^2,\Delta\hat{Q}]\right)\\
\Delta\hat{P}(\Delta\hat{Q})^2\Delta\hat{P} &=& \frac{2}{3}
\Delta\hat{P}(\Delta\hat{Q})^2\Delta\hat{P}+
\frac{1}{3}\left((\Delta\hat{P})^2(\Delta\hat{Q})^2+
  \Delta\hat{P}[(\Delta\hat{Q})^2,\Delta\hat{P}]\right)\\
\Delta\hat{P}\Delta\hat{Q}\Delta\hat{P}\Delta\hat{Q}&=& \frac{2}{3}
\Delta\hat{P}\Delta\hat{Q}\Delta\hat{P}\Delta\hat{Q}+
\frac{1}{3}\left((\Delta\hat{P})^2(\Delta\hat{Q})^2+ 
\Delta\hat{P}[\Delta\hat{Q},\Delta\hat{P}]\Delta\hat{Q}\right)\,.
\end{eqnarray*}
Evaluating the commutators and observing
\begin{eqnarray*}
 \Delta(Q^2P^2)&=& \frac{1}{6}\left\langle
  \Delta\hat{Q}\Delta\hat{P}\Delta\hat{Q}\Delta\hat{P}
  +(\Delta\hat{Q})^2(\Delta\hat{P})^2+ 
\Delta\hat{Q}(\Delta\hat{P})^2\Delta\hat{Q}\right.\\
&&\qquad\left.+
\Delta\hat{P}(\Delta\hat{Q})^2\Delta\hat{P}+
(\Delta\hat{P})^2(\Delta\hat{Q})^2+ 
\Delta\hat{P}\Delta\hat{Q}\Delta\hat{P}\Delta\hat{Q}\right\rangle
\end{eqnarray*}
 we obtain (\ref{Delta4}).
\end{proof}

\begin{prop} \label{Prop}
  If a state is such that it has a vanishing covariance
  $\Delta(QP)=\frac{1}{2}\langle\hat{Q}\hat{P}+\hat{P}\hat{Q}\rangle-
  \langle\hat{Q}\rangle\langle\hat{P}\rangle$ and zero skewness (third-order
  moments), then the relative fluctuations of $\hat{Q}$ and $\hat{P}$ are
  bounded from above by the relative fluctuation of
  $\hat{H}=\frac{1}{2}(\hat{Q}\hat{P}+\hat{P}\hat{Q})$:
\begin{equation} \label{InEq}
 \frac{(\Delta Q)^2}{\langle\hat{Q}\rangle^2}+ \frac{(\Delta
   P)^2}{\langle\hat{P}\rangle^2}< \frac{(\Delta
   H_0')^2}{\langle\hat{H}_0'\rangle^2}\,.
\end{equation}
\end{prop}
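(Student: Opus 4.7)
The plan is to rewrite $\Delta\hat{H}_0'$ directly in terms of the fluctuation operators $\Delta\hat{Q}$ and $\Delta\hat{P}$, square it, take the expectation value, and identify how the two assumptions (vanishing covariance, vanishing skewness) kill precisely the terms that do not fit the desired inequality.

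First I would note that $\langle\hat{H}_0'\rangle=\Delta(QP)+\langle\hat{Q}\rangle\langle\hat{P}\rangle$, so the assumption $\Delta(QP)=0$ gives the simple factorization $\langle\hat{H}_0'\rangle=\langle\hat{Q}\rangle\langle\hat{P}\rangle$, which is what is needed to turn the right-hand side of (\ref{InEq}) into a ratio with denominator $\langle\hat{Q}\rangle^2\langle\hat{P}\rangle^2$. Substituting $\hat{Q}=\langle\hat{Q}\rangle+\Delta\hat{Q}$ and $\hat{P}=\langle\hat{P}\rangle+\Delta\hat{P}$ into $\hat{H}_0'=\tfrac{1}{2}(\hat{Q}\hat{P}+\hat{P}\hat{Q})$ and subtracting $\langle\hat{H}_0'\rangle$ produces
\begin{equation*}
 \Delta\hat{H}_0'= \tfrac{1}{2}\bigl(\Delta\hat{Q}\Delta\hat{P}+
\Delta\hat{P}\Delta\hat{Q}\bigr)+ \langle\hat{P}\rangle\Delta\hat{Q}+
\langle\hat{Q}\rangle\Delta\hat{P}\,,
\end{equation*}
where again $\Delta(QP)=0$ has been used to drop a c-number term.

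Next I would square this expression and take the expectation value. The purely quadratic part of the square yields $\langle\hat{P}\rangle^2(\Delta Q)^2+\langle\hat{Q}\rangle^2(\Delta P)^2$ together with a cross term $\langle\hat{Q}\rangle\langle\hat{P}\rangle\langle\Delta\hat{Q}\Delta\hat{P}+\Delta\hat{P}\Delta\hat{Q}\rangle$ which vanishes because it equals $2\langle\hat{Q}\rangle\langle\hat{P}\rangle\Delta(QP)=0$. The third-order cross terms are exactly of the form appearing in (\ref{Delta3}) (and its $Q\leftrightarrow P$ analog), so they reduce to $2\langle\hat{P}\rangle\Delta(Q^2P)+2\langle\hat{Q}\rangle\Delta(QP^2)$, which vanishes by the zero-skewness hypothesis. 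The remaining purely quartic term $\tfrac{1}{4}\langle(\Delta\hat{Q}\Delta\hat{P}+\Delta\hat{P}\Delta\hat{Q})^2\rangle$ is evaluated by (\ref{Delta4}) to $\Delta(Q^2P^2)+\hbar^2/4$. Putting it together,
\begin{equation*}
 (\Delta H_0')^2= \langle\hat{P}\rangle^2(\Delta Q)^2+
\langle\hat{Q}\rangle^2(\Delta P)^2+ \Delta(Q^2P^2)+\tfrac{1}{4}\hbar^2\,.
\end{equation*}

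Finally I would divide by $\langle\hat{H}_0'\rangle^2=\langle\hat{Q}\rangle^2\langle\hat{P}\rangle^2$ to obtain the advertised inequality, with a remainder $(\Delta(Q^2P^2)+\hbar^2/4)/(\langle\hat{Q}\rangle^2\langle\hat{P}\rangle^2)$ that is strictly positive: indeed, (\ref{Delta4}) rewrites it as $\tfrac{1}{4}\langle(\Delta\hat{Q}\Delta\hat{P}+\Delta\hat{P}\Delta\hat{Q})^2\rangle\geq 0$, and strictness follows because the state cannot lie in the kernel of the self-adjoint operator $\Delta\hat{Q}\Delta\hat{P}+\Delta\hat{P}\Delta\hat{Q}$ while simultaneously having $\langle\hat{Q}\rangle,\langle\hat{P}\rangle\neq 0$ (which is implicit in the statement, since the denominators in (\ref{InEq}) must be nonzero). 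The only genuine obstacle is bookkeeping care with the symmetric ordering: ensuring that the third-order combinations produced by the square really match the totally symmetrized $\Delta(Q^2P)$ and $\Delta(QP^2)$ of (\ref{Delta3}), and that no stray $\hbar$-dependent contribution from reorderings has been missed apart from the $\hbar^2/4$ harvested via (\ref{Delta4}).
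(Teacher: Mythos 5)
Your proof is correct and follows essentially the same route as the paper: expanding $\hat{H}_0'$ in terms of $\Delta\hat{Q}$, $\Delta\hat{P}$ and the expectation values, invoking the reordering identities (\ref{Delta3}) and (\ref{Delta4}) to reduce the cross and quartic terms, and dividing by $\langle\hat{H}_0'\rangle^2=\langle\hat{Q}\rangle^2\langle\hat{P}\rangle^2$. The only difference is cosmetic --- you impose $\Delta(QP)=0$ at the outset rather than deriving the general formula for $\Delta(H_0'{}^2)$ first, and you are somewhat more explicit than the paper about why the leftover term $\frac{1}{4}\langle(\Delta\hat{Q}\Delta\hat{P}+\Delta\hat{P}\Delta\hat{Q})^2\rangle$ is strictly positive.
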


\begin{proof}
  Writing operators as $\hat{A}=\Delta\hat{A}+\langle\hat{A}\rangle$ in
  $\hat{H}_0'=\frac{1}{2}(\hat{Q}\hat{P}+\hat{P}\hat{Q})$, we obtain
\begin{eqnarray}
 \langle\hat{H}_0'\rangle&=& \frac{1}{2} \left\langle
 (\Delta\hat{Q}+\langle\hat{Q}\rangle) (\Delta\hat{P}+\langle\hat{P}\rangle)+
 (\Delta\hat{P}+\langle\hat{P}\rangle)
 (\Delta\hat{Q}+\langle\hat{Q}\rangle)\right\rangle\nonumber\\
&=& \frac{1}{2} \langle
\Delta\hat{Q}\Delta\hat{P}+\Delta\hat{P}\Delta\hat{Q}\rangle +
\langle\hat{Q}\rangle \langle\hat{P}\rangle= \Delta(QP)+
\langle\hat{Q}\rangle \langle\hat{P}\rangle\,.
\end{eqnarray}
(Note that $\langle\Delta\hat{A}\rangle=0$ for any $\hat{A}$.)  

The derivation
of the fluctuation $\Delta(H_0'{}^2)$ requires a longer calculation: We expand
\begin{eqnarray}
\Delta(H_0'{}^2) &=& \langle\hat{H}_0'{}^2\rangle-\langle\hat{H}_0'\rangle^2\\
&=& \frac{1}{4}\left\langle\left(
    (\Delta\hat{Q}+\langle\hat{Q}\rangle)
    (\Delta\hat{P}+\langle\hat{P}\rangle)+
    (\Delta\hat{P}+\langle\hat{P}\rangle)
    (\Delta\hat{Q}+\langle\hat{Q}\rangle)\right)^2\right\rangle\\
 &&-
\frac{1}{4}\left\langle (\Delta\hat{Q}+\langle\hat{Q}\rangle)
    (\Delta\hat{P}+\langle\hat{P}\rangle)+
    (\Delta\hat{P}+\langle\hat{P}\rangle)
    (\Delta\hat{Q}+\langle\hat{Q}\rangle)\right\rangle^2\\
&=& \frac{1}{4}
\left\langle\left(\Delta\hat{Q}\Delta\hat{P}+\Delta\hat{P}\Delta\hat{Q}+
    2\langle\hat{P}\rangle\Delta\hat{Q}+ 2\langle\hat{Q}\rangle\Delta\hat{P}+
    2\langle\hat{Q}\rangle\langle\hat{P}\rangle\right)^2\right\rangle\\
&&-
\frac{1}{4}\left(\left\langle\Delta\hat{Q}\Delta\hat{P}+
\Delta\hat{P}\Delta\hat{Q}\right\rangle+
2\langle\hat{Q}\rangle\langle\hat{P}\rangle\right)^2\\
&=& \frac{1}{4}
\left\langle\left(\Delta\hat{Q}\Delta\hat{P}+ 
\Delta\hat{P}\Delta\hat{Q}\right)^2\right\rangle \label{H1}\\
&&+
\frac{1}{2}\langle\hat{Q}\rangle 
\left\langle\Delta\hat{P}\left(\Delta\hat{Q}\Delta\hat{P}+
    \Delta\hat{P}\Delta\hat{Q}\right)+ \left(\Delta\hat{Q}\Delta\hat{P}+
    \Delta\hat{P}\Delta\hat{Q}\right) \Delta\hat{P}\right\rangle\label{H2}\\
&&+
\frac{1}{2}\langle\hat{P}\rangle 
\left\langle\Delta\hat{Q}\left(\Delta\hat{Q}\Delta\hat{P}+
    \Delta\hat{P}\Delta\hat{Q}\right)+ \left(\Delta\hat{Q}\Delta\hat{P}+
    \Delta\hat{P}\Delta\hat{Q}\right) \Delta\hat{Q}\right\rangle\label{H3}\\
&&+\langle\hat{Q}\rangle \langle\hat{P}\rangle
\left\langle\Delta\hat{Q}\Delta\hat{P}+ 
    \Delta\hat{P}\Delta\hat{Q}\right\rangle
  +\langle\hat{Q}\rangle^2\langle(\Delta\hat{P})^2\rangle+ 
  \langle\hat{P}\rangle \langle(\Delta\hat{Q})^2\rangle\\
&&- \frac{1}{4}
  \langle\Delta\hat{Q}\Delta\hat{P}+  \Delta\hat{P}\Delta\hat{Q}\rangle^2\,.
\end{eqnarray}
Using (\ref{Delta4}) in line (\ref{H1}), (\ref{Delta3}) in line (\ref{H3}) and
an analogous result in line (\ref{H2}), we obtain
\begin{eqnarray}
  \Delta(H_0'{}^2) &=& \langle\hat{Q}\rangle^2\Delta(P^2)+
  \langle\hat{P}\rangle^2\Delta(Q^2)+ 
  2\langle\hat{Q}\rangle\langle\hat{P}\rangle\Delta(QP)\nonumber\\
  &&+
  2\langle\hat{P}\rangle\Delta(Q^2P)+ 2\langle\hat{Q}\rangle\Delta(QP^2)+
  \Delta(Q^2P^2)+\frac{1}{4}\hbar^2-   \Delta(QP)^2\,.
\end{eqnarray}
If $\Delta(QP)=0$ and $\Delta(Q^2P)=0=\Delta(QP^2)$, we obtain
\begin{equation}
 \frac{\Delta(H_0'{}^2)}{\langle\hat{H}_0'\rangle^2}=
 \frac{\Delta(Q^2)}{\langle\hat{Q}\rangle^2}+
 \frac{\Delta(P^2)}{\langle\hat{P}\rangle^2}+ \frac{1}{4}
 \frac{\hbar^2+4\Delta(Q^2P^2)}{\langle\hat{Q}\rangle^2\langle\hat{P}\rangle^2}> 
\frac{\Delta(Q^2)}{\langle\hat{Q}\rangle^2}+
 \frac{\Delta(P^2)}{\langle\hat{P}\rangle^2}\,.
\end{equation}
\end{proof}

This result shows that a state with small relative $H$-fluctuations but large
relative $Q$-fluctuations must have non-zero covariance or skewness.

\subsection{Example}

As shown in \cite{FluctEn}, the right-hand side of (\ref{HDelta}) is strictly
negative for a Gaussian state in $Q$. This inequality then cannot be
fulfilled. The same paper showed that the right-hand side of (\ref{HDelta}) is
approximately zero for a state given by
\begin{equation} \label{Qlog}
 \psi(Q)=\sqrt{\sqrt{\frac{2}{\pi}}\frac{\sigma}{\hbar Q}} \exp\left(-\sigma^2
   (\log(Q/\bar{Q}))^2/\hbar^2+ i(\bar{\lambda}/\hbar)
   \log(Q/\bar{Q})\right)
\end{equation}
if $Q>0$ and $\psi(Q)=0$ otherwise,
with constants $\bar{Q}>0$, $\sigma>0$ and $\bar{\lambda}$. We now demonstrate
that such a state can be approximated by a state supported only on the
positive part of the spectrum of $\hat{H}_0'$, which then provides an example
of how the restriction given by Proposition~\ref{Prop} can be overcome by
states with non-zero covariance.

Let us choose a Gaussian
\begin{equation} \label{cGaussian}
 c_{\lambda}=\frac{N}{(2\pi)^{1/4}\sqrt{\sigma}}
 \exp\left(-\frac{(\lambda-\bar{\lambda})^2}{4\sigma^2}
+\frac{i\bar{p}\lambda}{\hbar}\right)
\end{equation}
for $\lambda>0$ and $c_{\lambda}=0$ otherwise, where
\begin{equation}
 N^2=\frac{2}{1+{\rm
       erf}\left(\bar{\lambda}/(\sqrt{2}\sigma)\right)} 
\end{equation}
normalizes $c_{\lambda}$ restricted to positive $\lambda$ and is close to
$N^2\approx 1$ for $\bar{\lambda}\gg\sigma$, or $\Delta
H_0'/\langle\hat{H}_0'\rangle\ll 1$.  Using the definition (\ref{psic}) with
$\alpha=\sqrt{2}$, we consider the state $\psi_{c+}(Q) = \int_0^{\infty}
c_{\lambda}\psi_{\lambda+}(Q){\rm d}\lambda$. The integral can be approximated
by extending the integration over positive $\lambda$ to all real $\lambda$,
which is valid provided $c_{\lambda}$ is negligible for $\lambda<0$. Given
(\ref{cGaussian}), the approximation can be used if the $\lambda$-variance
$\sigma$ is much less than the $\lambda$-expectation value,
$\sigma\ll\bar{\lambda}$. The same condition allows us to approximate
$N\approx 1$, and we obtain
\begin{eqnarray} \label{psi}
 \psi_{c+}(Q) &=& \int_0^{\infty}
 c_{\lambda}\psi_{\lambda+}(Q){\rm d}\lambda 
 \approx  \int_{-\infty}^{\infty} c_{\lambda}
 \psi_{\lambda+}(Q) {\rm d}\lambda\nonumber\\
&\approx& \frac{1}{(2\pi)^{3/4} \sqrt{\sigma\hbar Q}} \int_{-\infty}^{\infty}
  \exp(- (\lambda-\bar{\lambda})^2/4\sigma^2+ i\lambda(\bar{p}+\log
  Q)/\hbar) {\rm d}\lambda\\
&=& \sqrt{\sqrt{\frac{2}{\pi}}\frac{\sigma}{\hbar Q}}
\exp\left(-(\sigma^2/\hbar^2) 
  (\bar{p}+\log Q)^2+ i(\bar{\lambda}/\hbar) (\bar{p}+\log Q)\right)
\end{eqnarray}
for $Q>0$.  Defining $\bar{Q}=\exp(-\bar{p})$, the result equals (\ref{Qlog}).

The resulting state (\ref{psi}) shows that the $\log|Q|$-variance is given by
$\Delta\log |Q|=\hbar/(2\sigma)$, while the $\log|Q|$-expectation value is
$\langle\log|\hat{Q}|\rangle=-\bar{p}$. We can therefore maintain the
condition $\bar{\lambda}\gg\sigma$, or $\Delta
H_0'/\langle\hat{H}_0'\rangle\ll 1$, for the approximation in (\ref{psi}) to be
valid, and choose a small $\langle\hat{Q}\rangle$ with large $\Delta Q$.

According to (\ref{InEq}), this state must have non-zero covariance or
skewness. We can easily confirm the former property by computing
\begin{eqnarray}
 \langle\hat{Q}\rangle &=& \sqrt{\frac{2}{\pi}} \frac{\sigma}{\hbar}
 \int_0^{\infty} \exp\left(-2(\sigma^2/\hbar^2)
   (\log(Q/\bar{Q}))^2\right){\rm d}Q=I_1
=\bar{Q}\exp(\hbar^2/8\sigma^2)\\ 
 \langle\hat{P}\rangle&=& \sqrt{\frac{2}{\pi}}
 \frac{\sigma\bar{\lambda}}{\hbar} 
 \int_0^{\infty} \frac{1}{Q^2} \exp\left(-2(\sigma^2/\hbar^2)
   (\log(Q/\bar{Q}))^2\right){\rm d}Q=\bar{\lambda}I_{-1}
=\frac{\bar{\lambda}}{\bar{Q}} \exp(\hbar^2/8\sigma^2)
\end{eqnarray}
and
\begin{equation}
  \frac{1}{2}\langle\hat{Q}\hat{P}+\hat{P}\hat{Q}\rangle = {\rm
    Re}\langle\hat{Q}\hat{P}\rangle=  \bar{\lambda}I_0=
  \bar{\lambda}\,.
\end{equation}
where we have used the integrals
\begin{equation}
 I_a = \sqrt{\frac{2}{\pi}} \frac{\sigma}{\hbar} \int_{-\infty}^{\infty}
 e^{az} \exp(-2\sigma^2(z-\log\bar{Q})^2/\hbar^2){\rm d}z= \bar{Q}^a
 \exp(a^2\hbar^2/8\sigma^2)
\end{equation}
for real $a$.  Therefore,
\begin{equation}
 \Delta(QP)=\bar{\lambda}\left(1-\exp(\hbar^2/4\sigma^2)\right)<0
\end{equation}
is non-zero, with $|\Delta(QP)|$ large for $\sigma\ll\hbar$, such that $\Delta
Q$ can be large.

\section*{Acknowledgements}

This work was supported in part by NSF grant PHY-1607414.


\end{document}